\newcommand{\R}{{\mathbb{R}^3}}
\newcommand{\bv}{{\bf v}}
\newcommand{\bk}{{\bf k}}
\newcommand{\bvun}{{{\bf v}_1}}
\newcommand{\vmv}{{\bf v}-{\bf v}_1}
\title{INTERPLAY OF BOLTZMANN EQUATION AND CONTINUITY EQUATION FOR ACCELERATED ELECTRONS IN SOLAR FLARES\footnotemark[1]}
\author{A. CODISPOTI\footnotemark[2] \and N. PINAMONTI\footnotemark[2] \footnotemark[3]}
\begin{document} 
\renewcommand{\thefootnote}{\fnsymbol{footnote}}
 
\maketitle

\footnotetext[1]{Received by...}
\footnotetext[2]{Dipartimento di Matematica, Universit\`a di Genova,  via Dodecaneso, 35 I-16146 Genova,  Italy\\ (codispoti@dima.unige.it, pinamont@dima.unige.it).}
\footnotetext[3]{INFN, Sezione di Genova, Via Dodecaneso, 33 I-16146 Genova, Italy.}
\renewcommand{\thefootnote}{\arabic{footnote}} 

\begin{abstract}
During solar flares a large amount of electrons are accelerated within the plasma present in the solar atmosphere. Accurate measurements of the motion of these electrons start becoming available from the analysis of hard X-ray imaging-spectroscopy observations. 
In this paper, we discuss the linearized perturbations of the Boltzmann kinetic equation describing an ensemble of electrons accelerated by the energy release occurring during solar flares. Either in the limit of high energy or at vanishing background temperature such an equation reduces to a continuity equation equipped with an extra force of stochastic nature.
This stochastic force is actually described by the well known energy loss rate due to Coulomb collision with ambient particles, but, in order to match the collision kernel in the linearized Boltzmann equation it needs to be treated in a very specific manner. 
In the second part of the paper the derived continuity equation is solved with some hyperbolic techniques, and the obtained solution is written in a form suitable to be compared with data gathered by hard X--ray imaging--spectroscopy telescopes. 
Finally, a first validation of the model with NASA Reuven Ramaty High Energy Solar Spectroscopic Imager spectrometer measurements is provided.
\end{abstract}

\begin{keywords} 
Solar flares, Boltzmann equation, Landau equation, continuity equation, electron transport.
\end{keywords}

\begin{AMS} 
76P05, 82D10, 85A25
\end{AMS} 

\section{Introduction}As seen from the Earth the Sun appears quiet and stable. However, complicated phenomena can be observed, looking closer at its surface.
Among all perturbations occurring in the solar atmosphere, solar flares are the most energetic ones. 
Actually, during solar flares, huge quantities of energy are released. 
In the case of arc--shaped flares, according to the standard picture \cite{brown71, fletcher,tan88},
a large number of electrons are first accelerated above the loop top and 
then injected into the loop, finally they move down towards the chromosphere
following the magnetic field lines.
During this motion, electrons lose energy because of various physical processes. The most efficient energy loss mechanisms are Coulomb collisions with ambient electrons present in the solar atmosphere \cite{brown72,lh76,em78}.
However, these kind of interactions are not perfectly elastic and, in every collision, a small, almost negligible, part of energy is radiated away due to Bremsstrahlung, mainly in the Hard X--ray domain \cite{petrosian73, kon11}.
Photons emitted by means of this mechanism are visible also far away from the solar surface. 
Thanks to this effect we are able in principle to indirectly detect electron densities by analyzing hard X--ray photons emitted during flares.

In the X--regime, solar flares are observed by satellite imaging--spectrometers such as, YOHKOH \cite{acton92}, RHESSI \cite{ldh02}, which is still in operation, and STIX in the future Solar Orbiter mission \cite{benz12}.
Since these instruments are capable of collecting data both spatially and spectrally resolved, precise images of photon fluxes at various energies can be constructed. In particular, in  \cite{pmh07}, Piana et al showed how to construct maps of electron fluxes $F({\bf x},E)$ averaged over the line of sight and thus information about electrons (electron maps) can be obtained from satellite spectrometer data. 
Nowadays, these kind of measurements are becoming very precise and permit to test refined models of electron transport mechanisms.

Unfortunately, the information about the direction of electrons velocities is lost in the detection procedure, because only their energy can be measured. We shall see that this issue slightly modifies the validation process of theoretical models by comparison with the measured data (i.e., the images provided by these satellites).

 From the theoretical point of view, it is believed that during solar flares electrons are accelerated by magnetic reconnection. However, the actual mechanism governing this acceleration is still under investigation \cite{asch02,brown09,petrosian,zar11,bian12,guo13}. For this reason several statistical mechanisms of electron acceleration and various theoretical approaches have been developed in order to model the physics of solar flares.

Usually, since electrons are indistinguishable particles, their ensemble is described by a {\em density function} $f(t,{\bf x},\bv)$ which expresses the number of particles with velocity $\bv$ at the time $t$ in the position ${\bf x}$. The integral of $f$ over all possible velocities gives the standard particle density $n(t,{\bf x})$.
The evolution of the density function is given by various equations of plasma physics adapted to the kind of plasma present in the solar atmosphere. 
As an example, we recall the analysis presented in \cite{kennel-engelmann, schlickeiser-1, schlickeiser-2}, where perturbations of charged particles in a plasma are shown to satisfy a Fokker Planck equation.
Recently several theoretical diffusion models were considered and adapted to solar flares e.g. \cite{petrosian,bian12}.

Among all of the various possibilities, the continuity equation adapted to the phase space was employed to describe the electron motion along magnetic field lines in solar flares. 
We recall that the standard continuity equation has the following form

\begin{equation}\label{eq:continuity-total-derivative}
\frac{df}{dt}=\frac{\partial  f}{\partial t} + 
\bv \cdot \nabla_{\bf x}   f+ 
\frac{{\bf F}}{m}\cdot \nabla_\bv f 
=  S
\end{equation}
where a source term $S$ is considered to model injection of electrons in the flaring region due to magnetic reconnection. Furthermore, ${\bf F}$ is the total external force acting on electrons and $m$ is the electron mass. 
Notice in particular that whenever the force ${\bf F}$ is positional the previous equation is equivalent to 
\begin{equation}\label{eq:continuity-total-derivative-F}
\frac{\partial  f}{\partial t} + 
\nabla_{\bf x}   \cdot  \left(  \bv  f\right)+ 
\nabla_\bv\cdot\left(   \frac{{\bf F}}{m} f \right)
=  S.
\end{equation}
However, in order to take into account the statistical energy loss due to Coulomb collisions, the previous two equations are modified by adding the average value of a  stochastic force ${\bf F}_s$ to the total external force  ${\bf F}$. 
In particular ${\bf F}_s$ is chosen to match the {\bf energy loss rate} $\langle \nabla_{\bf x} E \rangle$ computed as the mean value of the energy loss in collisions with ambient particles, as e.g. \cite{em78, sp62}.
With this modification the continuity equation \eqref{eq:continuity-total-derivative} takes the form
\begin{equation}\label{eq:continuity-stochastic}
\frac{\partial  f}{\partial t} + 
\bv \cdot \nabla_{\bf x}   f+ 
\frac{{\bf F}}{m}\cdot \nabla_\bv f +
\frac{\langle\nabla_{\bf x} E\rangle}{m}\cdot \nabla_\bv f
=  S\;.
\end{equation}

A slightly different equation, valid for accelerated electrons propagating in a solar plasma, has been derived by Emslie, Barrett and Brown \cite{emsbarbro01} as a consequence of conservation of the electron flux along every energy channel.
It has been widely used in order to understand some details about solar flares e.g. \cite{brown71,brown72,syrshe72}.
Furthermore, it has recently been employed for the validation of various statistical energy loss rate models \cite{torre, codi13}.

However, in general the stochastic term ${\bf F}_s=\langle \nabla_{\bf x} E \rangle$ is not positional, it explicitly depends on 
the velocities, hence equation

\eqref{eq:continuity-stochastic} is not anymore equivalent to 
\begin{equation}\label{eq:continuity-stochastic-wrong}
\frac{\partial  f}{\partial t} + 
\nabla_{\bf x}   \cdot  \left(  \bv  f\right)+ 
\nabla_\bv\cdot\left(   \frac{{\bf F}}{m} f \right)
+
\nabla_\bv\cdot\left(   \frac{\langle\nabla_{\bf x} E\rangle}{m} f \right)
=  S
\end{equation}
contrary to the case of purely positional external forces.
In this paper, we show that only continuity equation in the form \eqref{eq:continuity-stochastic}
matches the asymptotic form of the Boltzmann equation   
which  is a non--linear  kinetic equation 
 that takes into account interaction/collisions of particles by means of a non linear collision kernel.
In particular, we shall obtain the stochastic contribution 
in the form
\begin{equation}\label{eqn:contloss}
L_C(f)(t,{\bf x},\bv)=\frac{\langle\nabla_{\bf x} E\rangle}{m} \cdot \nabla_\bv  f
\end{equation}
as an asymptotic approximation of the Boltzmann collision kernel.
Furthermore,  since the physical system we have in mind is open, 
the equation from which we start is the Boltzmann equation equipped with some appropriate given source term $S'$, which is related to $S$ in \eqref{eq:continuity-total-derivative},
\begin{equation}\label{eq:boltzmann}
\frac{\partial  f}{\partial t} + 
 \bv \cdot  \nabla_{\bf x} f+ 
\frac{{\bf F}}{m} \cdot \nabla_\bv f  =  Q(f,f) +S'
\end{equation}
where $Q(f,f)$ is the non linear Boltzmann collision kernel for Coulomb collisions and where, as before, ${\bf F}$ are the external classical forces.
In this paper we  treat $Q(f,f)$ in the Landau approximation \cite{degluc92}, namely in the limit of grazing collisions, or in the limit of large values for the Coulomb logarithm. 
Furthermore, we consider linear perturbations $\delta f$ over solutions $f$ of the Boltzmann equation.
In the asymptotic regime, i.e. for large velocities $|\bv|$ or for small temperatures in the case of Maxwell background,
the perturbation $\delta f$ satisfies  an equation  of the form \eqref{eq:continuity-stochastic} provided the stochastic force is taken into account with a term of the form $\frac{\langle\nabla_{\bf x} E\rangle}{m} \cdot \nabla_\bv  f$, which is slightly different from that considered in \cite{torre, codi13}.

After having obtained the continuity equation \eqref{eq:continuity-stochastic} equipped with the stochastic force used to model Coulomb collisions, we shall discuss how its outcome can be compared with measurements. As a first step, we have to adapt the obtained continuity equation for electron densities resolved in position and velocities, to describe electron fluxes resolved in position and energy. 
In order to accomplish this task, an averaging procedure over the velocity directions needs to be employed. 
Operating in this way we obtain a new continuity equation and the study of the differences between this new model and the original one in \cite{emsbarbro01} is at the core of this paper. We will address such study from both a theoretical viewpoint and by means of a computational validation based on empirical data measured by RHESSI.
The paper is organized as follows. In the next section we shall recall some standard facts about the Landau approximation of the Boltzmann collision kernel for Coulomb interactions and its linearization.
In the third section we shall analyze the asymptotic form of the linearized Boltzmann equation and we establish the connection to the continuity equation employed in the literature.
In the fourth section we shall discuss the solution of the continuity equation in the case of isotropic perturbations. We shall in particular highlight the hyperbolic character of the obtained equation.
In the fifth section we shall compare the obtained continuity equation for electron fluxes with the one presented in \cite{emsbarbro01} also discussing their  validation with measured data. Hence, we shall show how the proposed modified equation is consistent with experimental data provided by imaging spectroscopy. Finally, some comments are drawn in the last section.

\section{Landau approximation of the Boltzamnn collision kernel and its linearization}

The collision kernel present in the Boltzmann equation \eqref{eq:boltzmann} takes into account elastic collisions among 
particles distributed according to some density function $f(t,{\bf x},\bv)$. 
Furthermore, since collisions are supposed to be elastic, the 
   momentum and the kinetic energy are conserved in every collision. It means that, if we indicate 
  by ${\bf v}$ and ${\bf v}_1$ the interacting particles  velocities 
  before the collision, and by ${\bf v}'$ and ${\bf v}_1'$ the interacting particles velocities after the 
  collision, the following relations hold
  \begin{equation*}
{\bf v}+{\bf v}_1={\bf v}'+{\bf v}_1',\qquad 
|{\bf v}|^2+|{\bf v}_1|^2 =|{\bf v}'|^2+|{\bf v}_1'|^2.
  \end{equation*}
  From these equations, indicating by $\sigma=(\sin\theta\cos\phi,\sin\theta\sin\phi,\cos\theta)$  the 
  deviation angle, we can write explicitly ${\bf v}'$ and ${\bf v}_1'$ as
 \begin{equation*}
{\bf v}'=\frac{{\bf v}+{\bf v}_1}{2}+\frac{|{\bf v}-{\bf v}_1|}{2}\sigma,
\qquad
{\bf v}'_1=\frac{{\bf v}-{\bf v}_1}{2}-\frac{|{\bf v}-{\bf v}_1|}{2}\sigma.
 \end{equation*} 
The collision kernel $Q(f,f)$, which  accounts the interactions between colliding particles, takes the form \cite{glas96}
\begin{equation}\label{eq:boltzmann-kernel}
Q(f,f)(t,{\bf x},{\bf v})=\int_{\mathbb{R}^3}\int_{S^2} B(|{\bf v}-{\bf v}_1|,\sigma)(f'f'_1-ff_1)d^2\sigma d^3\bv_1
\end{equation}
where, in order to shorten the notation, it is understood that  $f'=f(\bv')$, $f_1=f(\bv_1)$ and $f'_1=f(\bv_1')$.
Furthermore, in the case of Coulomb collisions, the integral kernel $B$ is related to the Rutherford cross section
 \begin{equation}\label{eqn:Ruth}
   B(|{\bf v}-{\bf v}_1|,\sigma)=\frac{e^4}{m^2}\frac{1}{|{\bf v}-{\bf v}_1|^3 \sin^4\frac{\theta}{2}}\quad\quad \theta\geq\theta_0
 \end{equation}
where  $\theta_0$ is a cutoff representing the minimum scattering angle, which is related to the Debye screening length, and to the Coulomb logarithm $\Lambda$ 
 \[
 \Lambda=-\log{\sin{\frac{\theta_0}{2}}}\approx -\log{\frac{\theta_0}{2}}.
 \]
 The typical values of the Coulomb logarithm for the plasma in the solar atmosphere are sufficiently large to justify the assumption of grazing collisions, namely to consider only small scattering angles \cite{alevil2004}.  
As a rather direct consequence, we obtain an approximate expansion $Q_L$ of $Q$ through a first order expansion in the scattering angle and the evaluation of the integral over all possible scattering directions.
In this way the Landau approximation  of the Boltzmann collision kernel is obtained.

The Landau equation was derived for the first time by Lev Davidovi\v{c} Landau \cite{lanlif81}.
The limits in which this equation represents a good approximation of the Boltzmann equation and the relations between the solutions of the two equations have been widely analyzed \cite{degluc92,des92, bucor99, linxio14}.
In particular, the  solutions of the Boltzmann equation are well approximated by solutions of
the Landau equation if the density function $f$ is spatially homogenous \cite{arsbur91,gou97,vil98}. 
The analogous result in the non homogeneous case has been rather recently obtained in \cite{alevil2004}.

The Landau approximation of the Boltzmann collision kernel, which is also called Fokker Planck collision kernel \cite{degluc92}, 
has the form 
\begin{equation}
  Q_L(f,f)(t,{\bf x},{\bf v})=\frac{8 \pi\Lambda e^4}{m^2} \nabla \cdot\int_{\mathbb{R}^3} A \left(f_1\nabla_{\bf v}f-f\nabla_{{\bf 
  v}_1}f_1\right)d^3{\bf v}_1
\end{equation}
where $A$ is a linear operator from $\mathbb{R}^3$ to $\mathbb{R}^3$ whose corresponding square matrix has the following entries
\begin{equation}\label{eq:matrixA}
a_{ij} = \frac{1}{4|{\bf v}-{\bf v}_1|}\left(\delta_{ij}-\frac{({\bf v}-{\bf v}_1)_i({\bf v}-{\bf v}_1)_j}{|{\bf v}-{\bf 
  v}_1|^2}\right).
\end{equation}

Suppose $f$ is a solution of the Landau equation 
\[
  \frac{df}{dt}=Q_L(f,f)(t,{\bf x},\bv),
\]
and consider a small perturbation $f+\delta f$ around that solution. 

Since the perturbation $\delta f$ is regarded as small with respect to $f$, we simply deduce the equation for $\delta f$ by linearization of the Landau equation for $f$. In particular, we neglect terms of higher order in $\delta f$ inside the Landau collision kernel, which corresponds to the assumption that $Q(\delta f,\delta f)\approx 0$. 
Allowing for a source term $S'$ in order to take into account that the system we are considering is open, we find that the perturbation $\delta f$ is such that 
\begin{equation}\label{eq:landau-linearized}
  \frac{d\delta f}{dt}=L_B(\delta f)(t,{\bf x},\bv)+S_B(\delta f)(t,{\bf x},\bv) + S',
\end{equation}
where, for fixed $f$, $L_B(\delta f)(t,{\bf x},\bv)$ is an operator which acts locally in $\bv$ on $\delta f$, i.e. 
\begin{equation}\label{eqn:landloss}
  L_B(\delta f)(t,{\bf x},\bv)=
\frac{8\pi\Lambda e^4}{m^2} \nabla \cdot\int_{\mathbb{R}^3} A \left(f_1\nabla_{\bf v}\delta f-\delta f\nabla_{{\bf 
  v}_1}f_1\right)d^3{\bf v}_1,
\end{equation}
while $S_B$ is an operator in integral form on $\delta f$
\[
  S_B(\delta f)(t,{\bf x},\bv)=
\frac{8\pi\Lambda e^4}{m^2} \nabla \cdot\int_{\mathbb{R}^3} A \left(\delta f_1\nabla_{\bf v}f-f\nabla_{{\bf 
  v}_1}\delta f_1\right)d^3{\bf v}_1.
\]

\section{Reduction to the continuity equation}
	
In this section we shall see how the linearized Landau approximation of the Boltzmann equation reduces to the continuity equation. 
This result is obtained for large values of $|\bv|$, whenever the  background solution $f\in C^\infty(\mathbb{R}^7)$ 
decays rapidly in the velocity domain, namely whenever $f$ at fixed time and position is a Schwartz function $f(t,{\bf x},\cdot) \in \mathcal{S}(\mathbb{R}^3)$. The same outcome holds in the case of thermal backgrounds described by Maxwell distributions at small temperatures.

\subsection{Asymptotic expansion of the linearized Landau collision kernel}

We are interested in studying the asymptotic behavior of $S_B$ and $L_B$ for large $|\bv|$, whenever the background solution $f$ is of Schwartz type in the velocity domain. This is the case,  for example, whenever $f$ is a Maxwell distribution or a shifted Maxwell distribution.

Let us start discussing $S_B$ in \eqref{eq:landau-linearized}. Since $S_B$ is a local operator in $f$, the asymptotic behavior of $S_B$
for large values of $|\bv|$ is governed by $f$, whenever $\delta f$ is sufficiently regular as is assumed through this paper.
Hence, since $f\in \mathcal{S}(\Omega)$, $S_B$ decays faster than any inverse power of $|\bv|$ for large $|\bv|$.

Let's focus now on the local term $L_B(t,{\bf x},\bv)$ in \eqref{eq:landau-linearized}, where we expect to find the dominant contribution for large values of $|\bv|$.
Integration over all possible target velocities ${\bf v}_1$, recalling the form of the operator $A$ given in \eqref{eq:matrixA}, provides
\begin{eqnarray}
    L_B(\delta f)(t,{\bf x},\bv)&=&-\frac{8\pi\Lambda e^4}{m^2}\Bigg[-2\pi\delta f f-\frac{\Delta\delta f}{4}\int_{\R}\frac{f_1}{|\vmv|}d^3v_1
    \notag
    \\ 
    &&+\frac{\partial_{ij}\delta 
    f}{4}\int_\R\frac{(\vmv)^i(\vmv)^j}{|\vmv|^3}f_1d^3v_1\Bigg].
\label{eqn:lossint}
\end{eqnarray}
In order to discuss the asymptotic form of all these contributions, we need the following theorem   
   \begin{theorem}\label{theo:limite}   
   Let f be a function in $\mathcal{S}(\R)$ and $n:=\int_\R f_1 d^3v_1$. The following relations hold
   \begin{remunerate}
     \item
     $\displaystyle\int_{\R}\displaystyle\frac{f_1}{|\vmv|}d^3v_1=\displaystyle\frac{n}{|\bv|} + \displaystyle R_1(\bv),$
      \item
      $\displaystyle\int_\R\displaystyle\frac{(\vmv)^i(\vmv)^j}{|\vmv|^3}f_1d^3v_1 =\displaystyle\frac{\bv^i\bv^j}{|\bv|^3}n+ \displaystyle R_2(\bv),$   
    \item $\displaystyle\int_{\R}\displaystyle\frac{\partial_i f_1}{|\vmv|}d^3v_1=\displaystyle\frac{\bv^i}{|\bv|^3}n+ \displaystyle R_3(\bv),$
  
   \end{remunerate}
   where $R_1$ and $R_2$ are functions that vanish faster than $1/|\bv|$ for large $|\bv|$, while $R_3$ is a function that vanishes faster than $1/|\bv|^{2}$ for large values of $|\bv|$.\\
   \end{theorem}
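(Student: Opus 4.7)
\emph{Proof proposal.} The plan is to exploit the rapid decay of $f$ to Taylor-expand the singular kernels in powers of $\bv_1/|\bv|$ on the region where $|\bv_1|$ is much smaller than $|\bv|$, and to control the complementary region (which contains the singularity at $\bv_1=\bv$) by a tail estimate coming from the Schwartz bounds. For $|\bv|$ large I would split $\R$ into $D_1 := \{|\bv_1| < |\bv|/2\}$ and $D_2 := \R\setminus D_1$. On $D_2$, the Schwartz estimate $|f_1| \leq C_N(1+|\bv_1|)^{-N}$, valid for every $N$, together with the fact that the remaining factor in each integrand is $1/|\vmv|$ or is pointwise dominated by it, yields a bound of order $O(|\bv|^{-N})$ for each of the three integrals restricted to $D_2$: the extra $\bv_1$-decay left over from $f$ tames the local singularity at $\bv_1=\bv$. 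This tail contribution is thus negligible with respect to the claimed remainders $R_1, R_2, R_3$.

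On $D_1$ one has $|\vmv|\geq|\bv|/2$, so the binomial expansion $\frac{1}{|\vmv|} = \frac{1}{|\bv|} + \frac{\bv\cdot\bv_1}{|\bv|^3} + O(|\bv_1|^2/|\bv|^3)$ holds uniformly. Integrating against $f_1$ and using $\int f_1\,d^3v_1 = n$ together with the finiteness of the Schwartz moments $\int |\bv_1|^k f_1\,d^3v_1$ yields item~(1) with $R_1 = O(1/|\bv|^2)$. For item~(2) I would apply the same method to $g(\bv_1):=(\vmv)^i(\vmv)^j/|\vmv|^3$: one checks that $g(0)=\bv^i\bv^j/|\bv|^3$ and that the first derivatives of $g$ in $\bv_1$ are of order $1/|\bv|^2$ uniformly on $D_1$, so Taylor's theorem gives $g(\bv_1) = \bv^i\bv^j/|\bv|^3 + O(|\bv_1|/|\bv|^2)$. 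Integration against $f_1$ produces item~(2) with $R_2 = O(1/|\bv|^2)$, which is $o(1/|\bv|)$ since the leading term itself has magnitude $1/|\bv|$.

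For item~(3) I would integrate by parts in $\bv_1$ (the boundary terms vanish by the Schwartz property of $f$), rewriting the integral as a multiple of $\int f_1 (\vmv)^i/|\vmv|^3\,d^3v_1$, which is a specialisation of the kernel treated in item~(2). Alternatively one may expand $1/|\vmv|$ directly and use $\int \partial_i f_1\,d^3v_1 = 0$ and $\int \bv_1^j \partial_i f_1\,d^3v_1 = -\delta_{ij}n$. Either route yields a leading term of magnitude $1/|\bv|^2$ and a remainder $R_3 = O(1/|\bv|^3)$, hence $o(1/|\bv|^2)$ as required.

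Nothing here is conceptually deep; the main obstacle is careful bookkeeping of orders, in particular choosing the Schwartz exponent $N$ large enough on $D_2$ so as to simultaneously absorb the local singularity of the kernel at $\bv_1=\bv$ and to gain arbitrarily many powers of $1/|\bv|$, and verifying that the Taylor remainders on $D_1$ are uniformly bounded by the claimed orders before integration in $\bv_1$.
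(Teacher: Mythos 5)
Your argument is correct, but it takes a genuinely different route from the paper's. The paper never splits the integration domain: it multiplies each integral by $|\bv|$, writes $|\bv|=(|\bv|-|\bvun|)+|\bvun|$ inside the integrand, treats the first piece by dominated convergence (the factor $(|\bv|-|\bvun|)/|\vmv|$ is bounded by $1$ and tends to $1$ pointwise) and the second piece by a Fourier--analytic lemma: the Green operator applied to $|\bvun|f_1$ has an $L^1$ Fourier transform, hence vanishes at infinity by Riemann--Lebesgue. That argument is shorter but purely qualitative --- it yields $R_1,R_2=o(1/|\bv|)$ with no rate. Your near/far decomposition with a uniform Taylor expansion on $\{|\bvun|<|\bv|/2\}$ buys explicit rates ($R_1,R_2=O(1/|\bv|^{2})$, $R_3=O(1/|\bv|^{3})$) and avoids Fourier analysis altogether; the tail estimate on the complementary region is sound because the kernels are locally integrable in three dimensions and the Schwartz decay of $f$ can be split so as to absorb both the singularity at $\bvun=\bv$ and arbitrarily many powers of $1/|\bv|$. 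For item (3) your argument is in fact more complete than the paper's, which merely reduces to item (1) applied to $\partial_i f$ and thereby only shows that the integral is $o(1/|\bv|)$, without identifying the coefficient of the $1/|\bv|^{2}$ term claimed in the statement.

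One point you must make explicit: both of your routes for item (3) produce the leading term $-\,\bv^i n/|\bv|^{3}$, not $+\,\bv^i n/|\bv|^{3}$. Indeed $\int \bvun^j\,\partial_i f_1\,d^3v_1=-\delta_{ij}n$ gives $\frac{\bv^j}{|\bv|^{3}}(-\delta_{ij}n)=-\frac{\bv^i}{|\bv|^{3}}n$, and the integration-by-parts route gives $-\int f_1\,(\vmv)^i/|\vmv|^{3}\,d^3v_1$, whose leading behaviour is $-\bv^i n/|\bv|^{3}$; equivalently, the integral equals $\partial_{v^i}\int f_1/|\vmv|\,d^3v_1\approx\partial_{v^i}(n/|\bv|)=-\bv^i n/|\bv|^{3}$. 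So the computation you set up actually contradicts the sign printed in item (3). Since the computation is elementary and robust, this points to a sign typo in the statement rather than an error in your method, but you should not hide the issue behind the phrase ``a leading term of magnitude $1/|\bv|^{2}$'': state the leading term with its sign and note the discrepancy.
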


The proof of the theorem depends on the 
   \begin{lemma}\label{lemma:limitaz1}
     Let $f(\bv)$ be a function in $\mathcal{S}(\R)$, $\bv\in\R$ and $h(\bv) := |\bv|f(\bv)$. The Fourier transform of the Green operator applied to $h$, namely the Fourier transform of  
     \begin{equation}
      \mathcal{G}(h) :=\int_{\R}\frac{|\bvun|f_1}{|\vmv|}d^3v_1,
     \end{equation}
     is a function in the $L^1$ space.
   \end{lemma}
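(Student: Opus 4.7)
The idea is to recognize $\mathcal{G}(h)$ as a convolution with the Coulomb Green function of $-\Delta/(4\pi)$ on $\R$: setting $h(\bvun):=|\bvun|f_1$, one has $\mathcal{G}(h)(\bv)=(|\cdot|^{-1}\ast h)(\bv)$. Taking Fourier transforms and using the standard identity $\widehat{|\cdot|^{-1}}(\bk)=4\pi/|\bk|^2$, the claim reduces to showing that
\[
\widehat{\mathcal{G}(h)}(\bk)\;=\;\frac{4\pi}{|\bk|^2}\,\hat h(\bk)
\]
belongs to $L^1(\R)$. The plan is a bootstrap between two bounds on $\hat h$: a crude $L^\infty$ bound valid near the origin and a decay bound at infinity coming from the regularity of $h$.

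First, because $f\in\mathcal{S}(\R)$, the function $h(\bv)=|\bv|f(\bv)$ is continuous and decays rapidly, hence belongs to $L^1(\R)$, so $\|\hat h\|_\infty\le\|h\|_1$. Next I would compute the distributional Laplacian of $h$. Using $\Delta|\bv|=2/|\bv|$ on $\R$ together with the product rule one obtains
\[
\Delta h(\bv)\;=\;\frac{2}{|\bv|}\,f(\bv)\;+\;2\,\frac{\bv}{|\bv|}\cdot\nabla f(\bv)\;+\;|\bv|\,\Delta f(\bv).
\]
Each summand lies in $L^1(\R)$: the first because $1/|\bv|$ is locally integrable in three dimensions and $f$ is Schwartz; the other two because they are bounded pointwise by Schwartz functions. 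From $\Delta h\in L^1(\R)$ and $\widehat{\Delta h}(\bk)=-|\bk|^2\hat h(\bk)$ one then gets the global bound $|\hat h(\bk)|\le\|\Delta h\|_1/|\bk|^2$.

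To conclude, split the integral of $|\hat h(\bk)|/|\bk|^2$ at $|\bk|=1$. On the ball $|\bk|\le 1$ use $|\hat h(\bk)|/|\bk|^2\le\|h\|_1/|\bk|^2$, whose integral is finite because the spherical volume element $|\bk|^2\,d|\bk|\,d\Omega$ absorbs the two inverse powers. On the exterior $|\bk|>1$ use $|\hat h(\bk)|/|\bk|^2\le\|\Delta h\|_1/|\bk|^4$, which is integrable at infinity in three dimensions. Adding the two contributions yields $\widehat{\mathcal{G}(h)}\in L^1(\R)$, as desired.

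The only non-routine point, and the main obstacle, is the distributional identity $\Delta|\bv|=2/|\bv|$ on $\R$, which must be justified because $|\bv|$ is not $C^1$ at the origin. The key observation is that the pointwise expression $2/|\bv|$ valid off the origin is already locally integrable in three dimensions, so no singular correction at $\bv=0$ appears when pairing with test functions; the distributional and pointwise formulas coincide, and it is this fact that drives the $O(1/|\bk|^2)$ decay of $\hat h$ needed in Step three.
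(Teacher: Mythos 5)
Your proof is correct and follows essentially the same route as the paper: bound $\hat h$ by a constant near the origin and by $\|\Delta h\|_1/|\bk|^2$ at infinity (the paper combines these into the single estimate $|\hat h(\bk)|\le C(1+|\bk|^2)^{-1}$), then multiply by the Green multiplier $|\bk|^{-2}$ and integrate. The one place you go beyond the paper is in justifying that $\Delta h\in L^1(\R)$ via the distributional identity $\Delta|\bv|=2/|\bv|$ --- the paper simply asserts this, and your verification that no singular term arises at the origin is a worthwhile addition since $h$ is not smooth there.
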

  {\em Proof of Lemma \ref{lemma:limitaz1}.}
     Since $f$ is
     a Schwartz function, both $h$ and $\Delta h$ are in $L^1(\mathbb{R}^3)$. 
     For the Riemann--Lebesgue lemma their Fourier transform $\mathcal{F}(h)$ and $\mathcal{F}(\Delta h)$ are uniformly bounded. 
     By standard properties of the Fourier transform, we have that 
      \begin{equation}\label{eqn:maggioraz}
        \mathcal{F}(h)(\bk)\leq \frac{C}{(1+|{\bk}|^2)}
     \end{equation}     
      for some positive constant $C$.
     Let's recall now that the Green operator in Fourier space is a multiplicative operator namely 
      $\mathcal{F}(\mathcal{G}(h))(\bk)  = \mathcal{F}(h)(\bk) {|\bk|^{-2}}$, hence 
      \[
      |\mathcal{F}(\mathcal{G}(h))(\bk)| \leq \frac{C}{|\bk|^2(1+|{\bk}|^2)}
      \]
 	 which means that $|\mathcal{F}(\mathcal{G}(h))(\bk)|$ is bounded by an element of $L^1(\mathbb{R}^3)$. So, we have the thesis.\qquad\endproof

   We can now proceed with the proof of Theorem \ref{theo:limite}.
    {\em Proof of Theorem \ref{theo:limite}}
      We will first prove $1)$ computing the limit 
\[
\lim_{|\bv|\to\infty} |\bv|\int_{\mathbb{R}^3}\frac{f_1}{|\vmv|}d^3v_1 =
\lim_{|\bv|\to\infty}  \left[ \int_{\R}\frac{|\bv|-|\bvun|}{|\vmv|}f_1d^3v_1+\int_{\R}\frac{|\bvun|f_1}{|\vmv|}d^3v_1 \right]
\]
where we used the linearity of the Green operator in order to divide the desired limit in two parts. 
Let us discuss them separately. 
As shown in Lemma \ref{lemma:limitaz1}, the second integral in the previous expression is the inverse Fourier transform of an $L^1$ function, hence, by Riemann-Lebesgue lemma, its limit for large $|\bv|$ vanishes. 
Since the absolute value of the function ${(|\bv|-|\bvun|)}/{|\vmv|}$ is bounded by $1$ uniformly in $\bv$, 
the limit of the first integral can be obtained applying the dominated convergence theorem, namely taking the limit before computing the integral. 
The final result is 
\[
\lim_{|\bv|\to\infty} |\bv|\int_{\mathbb{R}^3}\frac{f_1}{|\vmv|}d^3v_1  = n
\]
and this proves the first assertion.

Let's proceed with $2)$. We observe that
  \begin{gather*}
|\bv| \int_\R \frac{(\vmv)^i(\vmv)^j}{|\vmv|^3}f_1d^3v_1=
\int_\R\frac{(|\bv|-|\bvun|)}{|\vmv|}\frac{(\vmv)^i(\vmv)^j}{|\vmv|^2}f_1d^3v_1+\\
+\int_\R\frac{|\bvun|}{|\vmv|}\frac{(\vmv)^i(\vmv)^j}{|\vmv|^2}f_1d^3v_1.
\end{gather*}
 
For large values of $|\bv|$, the limit of the first term of the last sum can be obtained applying 
the dominated convergence theorem.
The second term tends to zero as $|\bv|$ is very large. In fact, as $f_1$ is positive,
\begin{equation*}
 0\leq\left|\int_{\R}\frac{|\bvun|}{|\vmv|}\frac{(\vmv)^i(\vmv)^j}{|\vmv|^2}f_1d^3v_1\right|\leq 
  \int_{\R}\frac{|\bvun|f_1}{|\vmv|}d^3v_1.
  \end{equation*}
We can thus apply an argument similar to the one employed in the proof of $1)$ to show that it vanishes.
Hence we have that 
\[
\lim_{|\bv|\to\infty}|\bv| \int_\R \frac{(\vmv)^i(\vmv)^j}{|\vmv|^3}f_1d^3v_1= \frac{\bv^i\bv^j}{|\bv|^3}n,
\]
from which follows the thesis.

    In order to prove $3)$ we observe that
	\[
      |\bv|\int_\R\frac{\partial_if_1}{|\vmv|}d^3v_1 =  \int_\R\frac{(|\bv|-|\bvun|)}{|\vmv|}\partial_i 
      f_1d^3v_1+\int_\R\frac{|\bvun|}{|\vmv|}\partial_i 
      f_1d^3v_1.
   	\]
    Since $f\in\mathcal{S}(\R)$,
    also its partial derivatives are in $\mathcal{S}(\R)$, i.e. $\partial_i f\in\mathcal{S}(\R)$ 
    for each $i=1,2,3$, and we can thus proceed as in in the proof of $1)$. \qquad\endproof

Applying the result of the previous theorem to \eqref{eqn:lossint} and recalling the discussion about the decaying properties of $S_B$ given at the beginning of this section, we have the following

\begin{proposition}\label{pr:asymptotic-QL}
The linearized Landau integral kernel $Q_L = L_B + S_B$, computed over a background distribution $f\in C^\infty$ which is of Schwartz type in the velocity domain, is such that 
\begin{equation*}
  Q_L(\delta f) = L_B(t,{\bf x},\bv)+ S_B(t,{\bf x},\bv) = -\frac{8\pi\Lambda e^4}{m^2}\left(-\frac{n}{4|\bv|}\Delta\delta f+\frac{n}{4}\frac{\bv^i\bv^j}{|\bv|^3}\partial_{ij}\delta f\right) + R.
\end{equation*}
where $R$ is function which vanishes faster than $1/|\bv|$ for large values of $|\bv|$.
\end{proposition}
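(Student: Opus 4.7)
The plan is to substitute the asymptotic expansions provided by Theorem \ref{theo:limite} directly into the explicit form of $L_B$ given in \eqref{eqn:lossint}, and then to absorb $S_B$ into the remainder by invoking the decay argument already recorded at the opening of the subsection.

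First I would dispatch the three contributions inside the brackets of \eqref{eqn:lossint} one at a time. The purely local term $-2\pi\, f\,\delta f$ contains $f(t,{\bf x},\bv)$ evaluated at the large velocity; since $f(t,{\bf x},\cdot)\in\mathcal{S}(\R)$ and $\delta f$ is assumed sufficiently regular (in particular at worst polynomially bounded in $\bv$), this contribution decays faster than any inverse power of $|\bv|$ and is therefore absorbed into $R$. For the two remaining integrals I would apply parts $1)$ and $2)$ of Theorem \ref{theo:limite}: the coefficient of $-\Delta\delta f/4$ equals $n/|\bv| + R_1(\bv)$ and the coefficient of $\partial_{ij}\delta f/4$ equals $\bv^i\bv^j n/|\bv|^3 + R_2(\bv)$, with both $R_1$ and $R_2$ vanishing faster than $1/|\bv|$. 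The leading pieces $n/|\bv|$ and $\bv^i\bv^j n/|\bv|^3$, once multiplied by the second derivatives of $\delta f$ and by the overall prefactor, reproduce exactly the expression displayed in the statement, while the remainder pieces $R_1\Delta\delta f$ and $R_2\partial_{ij}\delta f$ are swept into $R$.

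Second, I would handle $S_B$ via the observation already made at the start of the subsection. The integrand defining $S_B$ contains $f(t,{\bf x},\bv)$ and $\nabla_\bv f(t,{\bf x},\bv)$ as factors which can be pulled outside the $\bv_1$-integration (modulo the matrix $A(\bv-\bv_1)$ and the outer $\nabla_\bv$), and hence the Schwartz decay of $f$ in velocity forces $S_B$ to vanish faster than any inverse power of $|\bv|$; a fortiori it belongs to $R$. Collecting the surviving leading contributions yields the claimed identity.

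The only delicate point I anticipate is the bookkeeping of the remainder: one must check that the products $R_1\Delta\delta f$ and $R_2\partial_{ij}\delta f$ still vanish faster than $1/|\bv|$, which requires a mild growth control on the second velocity derivatives of the perturbation $\delta f$ (for instance boundedness or polynomial growth compatible with the Schwartz decay of the background). Under the standard regularity assumptions on $\delta f$ used throughout the paper, the proof reduces to a direct assembly of the limits provided by Theorem \ref{theo:limite} with the rapid decay inherited from the Schwartz background.
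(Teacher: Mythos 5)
Your proposal is correct and follows essentially the same route as the paper, whose proof consists precisely of applying Theorem~\ref{theo:limite} to the two integrals in \eqref{eqn:lossint} and recalling the rapid decay of $S_B$ (and of the local term $-2\pi\,\delta f\, f$) coming from the Schwartz property of $f$. Your closing remark about needing growth control on the second velocity derivatives of $\delta f$ to keep $R_1\Delta\delta f$ and $R_2\partial_{ij}\delta f$ in the remainder class simply makes explicit a regularity assumption the paper states only informally (``whenever $\delta f$ is sufficiently regular as is assumed through this paper'').
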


We specialize now the asymptotic expansion provided by the previous proposition for backgrounds described by Maxwell distribution
\begin{equation}\label{eq:maxwell-distribution}
f(t,{\bf x},{\bf v}) = \frac{n (m\beta)^{\frac{3}{2}}}{\sqrt{8}\pi^{\frac{3}{2}}}e^{-\frac{mv^2\beta}{2}},
\end{equation}
where $\beta={(k_bT)}^{-1}$, $k_b$ is the Boltzmann constant and $T$ the temperature of background particles. 
In particular, we have that for large $\beta$, i.e. in the limit of vanishing temperature (cold target), the asymptotic expression found above becomes exact, and so the following proposition holds.

\begin{proposition}\label{pr:asymptotic-cold}
The linearized Landau kernel $Q_L=L_B(t,{\bf x},\bv) + S_B(t,{\bf x},\bv)$ defined as in \eqref{eq:landau-linearized} for perturbations $\delta f$ over a Maxwell distribution $f$ as in \eqref{eq:maxwell-distribution} at inverse temperature $\beta$ is such that 
\[
\lim_{\beta\to\infty} Q_L(\delta f)(t,{\bf x},\bv)
= -\frac{8\pi\Lambda e^4}{m^2}\left(-\frac{n}{4|\bv|}\Delta\delta f+\frac{n}{4}\frac{\bv^i\bv^j}{|\bv|^3}\partial_{ij}\delta f\right)\qquad |\bv|\neq 0.
\] 
\end{proposition}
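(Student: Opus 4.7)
The plan is to exploit the fact that the Maxwell distribution \eqref{eq:maxwell-distribution} at inverse temperature $\beta$ is an approximate identity of total mass $n$, concentrating in the limit $\beta\to\infty$ at $\bv_1=0$. Two complementary effects then occur simultaneously for $|\bv|\neq 0$ fixed: the point value $f(\bv)$ and all its derivatives decay like $\beta^{3/2}e^{-m|\bv|^2\beta/2}$, whereas convolution-type integrals $\int f_1\,g(\bv_1)\,d^3v_1$ converge to $n\,g(0)$ for any $g$ continuous at the origin. Together these two facts force the remainder $R$ in Proposition \ref{pr:asymptotic-QL} to vanish exactly.

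First I would dispose of $S_B$ and of the $-2\pi\delta f\,f$ term inside the bracket of \eqref{eqn:lossint}. In the definition of $S_B$ the background distribution appears outside the $\bv_1$-integration as $f(\bv)$ and $\nabla_\bv f(\bv)$, both of which decay exponentially in $\beta$ at any fixed $\bv\neq 0$; the remaining $\bv_1$-integrals do not involve $f$ and are bounded uniformly in $\beta$ (they depend on $\delta f$ and on the $\beta$-independent kernel $A$), so after the outer $\nabla_\bv$ every term still carries at least one exponentially small factor. The same observation kills the pointwise contribution $-2\pi\delta f(\bv)\,f(\bv)$.

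It then remains to compute the two integrals
\[
I_1(\bv)=\int_{\R}\frac{f_1}{|\vmv|}\,d^3v_1,\qquad I_2^{ij}(\bv)=\int_{\R}\frac{(\vmv)^i(\vmv)^j}{|\vmv|^3}\,f_1\,d^3v_1,
\]
and to show $I_1\to n/|\bv|$ and $I_2^{ij}\to n\,\bv^i\bv^j/|\bv|^3$ in the cold limit. Since $|\bv|\neq 0$, the kernels $g_1(\bv_1)=1/|\vmv|$ and $g_2^{ij}(\bv_1)=(\vmv)^i(\vmv)^j/|\vmv|^3$ are smooth on any ball $B_\delta=\{|\bv_1|<\delta\}$ with $\delta<|\bv|/2$, and take the target values $1/|\bv|$ and $\bv^i\bv^j/|\bv|^3$ at $\bv_1=0$. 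Uniform continuity of $g_1,g_2^{ij}$ on $B_\delta$ together with $\int_{B_\delta}f_1\,d^3v_1\to n$ yield the asserted limits from this region; on the complement, the pointwise Gaussian bound $f_1\le n(m\beta/(2\pi))^{3/2}e^{-m\delta^2\beta/2}$ holds, and one invokes the local integrability of $g_1$ and $g_2^{ij}$ to show that the tail is negligible. Substituting these limit values back into \eqref{eqn:lossint} reproduces exactly the right-hand side in the statement.

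The main obstacle I expect is the last tail estimate on $|\bv_1|\ge\delta$: the kernels have a non-integrable pointwise behaviour at $\bv_1=\bv$, so a crude sup-norm bound fails there. The clean way out is to split the complement of $B_\delta$ into a fixed shell $\delta\le|\bv_1|\le R$ (containing $\bv_1=\bv$) and a far-tail $|\bv_1|\ge R$: on the shell, $g_1$ and $g_2^{ij}$ belong to $L^1$ with $\beta$-independent norm, while the above pointwise bound dominates $f_1$ by $\beta^{3/2}e^{-c\beta}$ for some $c>0$; the far-tail is disposed of by standard Gaussian tail bounds. Once this bookkeeping is in place, the rest of the argument is a direct substitution into the decomposition \eqref{eqn:lossint} combined with the exponential smallness of the Maxwell density at the fixed point $\bv\neq 0$.
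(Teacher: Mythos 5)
Your proposal is correct and follows essentially the same route as the paper: the paper's (very brief) proof likewise observes that the Maxwell distribution tends to $n$ times a Dirac delta at the origin, so that the local-in-$f$ terms ($S_B$ and $-2\pi\,\delta f\,f$) vanish for $\bv\neq 0$ while the $\bv_1$-integrals in \eqref{eqn:lossint} converge to the kernel values at $\bv_1=0$ times $n$. Your write-up simply supplies the tail and singularity bookkeeping that the paper leaves implicit.
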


\begin{proof}
In order to prove this statement, it is sufficient to notice that, in the limit of large $\beta$, the Maxwell distribution
is proportional to a regularization of the Dirac delta centered in zero, where the factor of proportionality is noting but the plasma density.  
Hence, being $S_B$ local in $f$, it vanishes for $|\bv|\neq 0$. At the same time, analyzing $L_B$ in \eqref{eqn:lossint} we have the thesis.\qquad
\end{proof}

In the previous proposition  we have seen that the leading contribution in the linearized Landau kernel $Q_L$ for large values of $|\bv|$, 
is expressed by the following operator
\begin{equation*}
  \tilde{L}_B(t,{\bf x},\bv)=
  \frac{8\pi\Lambda e^4}{m^2}\left(\frac{n}{4|\bv|}\Delta\delta f-\frac{n}{4}\frac{\bv^i\bv^j}{|\bv|^3}\partial_{ij}\delta 
f\right).
\end{equation*}
We shall now specialize the form of $\tilde{L}_B$ for $\delta f$ which are isotropic in $\bv$ namely which are functions of $|\bv|$,
\begin{equation}\label{eq:isotropic}
\delta f(\bv) = g (|\bv|).
\end{equation}
For this kind of functions, by direct computation, it holds that 

\begin{equation}\label{eq:asymptotic-landau-loss-term}
  \tilde{L}_B(t,{\bf x},\bv)=
  \frac{8\pi\Lambda e^4}{m^2}\frac{n}{2|\bv|^2} g'   =  
  \frac{8\pi\Lambda e^4}{m^2}\frac{n}{2|\bv|^3} \bv\cdot\nabla \delta f.
\end{equation}

\subsection{Asymptotic form of the energy loss term and its relation with the Landau integral kernel}

In the previous section we have analyzed the asymptotic form of the loss term \eqref{eqn:landloss} of the Landau equation \eqref{eq:landau-linearized} in the limit of high values of electron velocities, which is equivalent to the limit of small temperatures in the case of thermal background. 
Here we compare the asymptotic form of the Landau collision kernel obtained so far with a stochastic loss term added to a continuity equation \eqref{eq:continuity-stochastic} for the linearized perturbations $\delta f$.
Let us recall the equation for the linearized perturbation $\delta f$
\[ \frac{\partial \delta f}{\partial t} +
\bv \cdot  \nabla_{\bf x}  \delta f+
\frac{{\bf F}}{m}  \cdot \nabla_\bv   \delta f 
 = -\frac{\langle\nabla_{\bf x} E\rangle}{m}\cdot\nabla_{\bf v}\delta 
   f+S(t,{\bf x},{\bf v}),
\]

 where $\langle\nabla_{\bf x} E\rangle$ is the energy loss due to collisions. The loss term due to collisions in the previous equation is thus given by
\begin{equation}\label{eq:LC}
 L_C(\delta f)(t,{\bf x},\bv):=  -\frac{\langle\nabla_{\bf x} E\rangle}{m}\cdot\nabla_{\bf v}\delta f.
\end{equation}
Let us now briefly discuss the form of the stochastic force $\langle \nabla_{{\bf x}}E \rangle$ due to Coulomb collision of the perturbed particles distributed as  $\delta f$ with background particles in the ensemble $f$.
We start recalling that the Coulomb collisional energy loss rate obtained in \cite{sp62,bubu62,lg63,em78} is
 \begin{equation*}
   \left\langle \frac{dE}{dt} \right\rangle=\int_{\mathbb{R}^3}\int_{S^2}B(|{\bf v}-{\bf v}_1|,\sigma)\Delta 
   T f_1 d^2\sigma d^3v_1
 \end{equation*}
 where $B$ is defined in terms of the Rutherford cross section as in \eqref{eqn:Ruth} and  $\Delta T$ is the variation of the kinetic energy of an accelerated electron due to a single collision, namely
\[
 \Delta T=-\frac{m|\vmv|}{2}{\omega}\cdot\bv-\frac{m|\vmv|^2}{4}(\cos\theta-1),
\] 
where ${\omega}= \sigma -(0,0,1) = (\sin\theta\cos\phi,\sin\theta\sin\phi,\cos\theta-1)$ is the deviation angle \cite{bubu62}.
 Performing the integral over all possible scattering angles and
recalling that $\left\langle \frac{dE}{dt} \right\rangle={\bf v}\cdot\langle \nabla_{{\bf x}}E \rangle$, 
the stochastic force results as being
 \begin{equation*}
\langle \nabla_{{\bf x}}E \rangle=-\frac{8\pi\Lambda e^4}{m}\left[\int_{\mathbb{R}^3}\frac{\nabla_{{\bf v}_1}f_1}{|{\bf v}-{\bf v}_1|}d^3v_1-\frac{\bf v}{2|{\bf v}|^2}\int_\R \frac{f_1}{|\vmv|}d^3v_1 
   \right].
 \end{equation*}
From this expression it is easy to write the loss term $L_C$ and, furthermore, its asymptotic form can be obtained using Theorem \ref{theo:limite}.
In particular the following proposition holds
 \begin{proposition}
Let $f$ be any smooth function which is of Schwartz type in the velocity domain. The asymptotic expansion of $L_C(t,{\bf x},\bv)$ defined in \eqref{eq:LC} for large $|\bv|$  
 is such that
 \begin{equation*}
   L_C(t,{\bf x},\bv) =  \tilde{L}_C(t,{\bf x},\bv) +R = \frac{8\pi\Lambda e^4}{m^2}\frac{n}{2|\bv|^3}\bv\cdot\nabla_{\bv}\delta f+ R   
 \end{equation*}

 where, for large values of $|\bv|$, $R$ is a function that vanishes faster than $1/|\bv|^3$. \\ 
\end{proposition}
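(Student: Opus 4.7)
The plan is to reduce the asymptotic analysis of $L_C$ to two direct applications of Theorem~\ref{theo:limite}. I would first substitute the closed-form expression for $\langle\nabla_{\bf x}E\rangle$ displayed just above the proposition into $L_C(\delta f) = -m^{-1}\langle\nabla_{\bf x}E\rangle\cdot\nabla_\bv\delta f$, so that only the two integrals $\int f_1/|\vmv|\,d^3v_1$ and $\int\nabla_{\bvun}f_1/|\vmv|\,d^3v_1$ remain to be evaluated. Part~(1) of the theorem rewrites the former as $n/|\bv|+R_1$, and part~(3) rewrites the latter componentwise as $\bv n/|\bv|^3+R_3$. The two leading contributions then combine as
\[
\frac{\bv n}{|\bv|^3} - \frac{\bv}{2|\bv|^2}\cdot\frac{n}{|\bv|} = \frac{\bv n}{2|\bv|^3},
\]
and feeding this back through the $8\pi\Lambda e^4/m^2$ prefactor and contracting with $\nabla_\bv\delta f$ reproduces $\tilde L_C$ exactly.

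What remains is to control
\[
R = \frac{8\pi\Lambda e^4}{m^2}\left[R_3 - \frac{\bv R_1}{2|\bv|^2}\right]\cdot\nabla_\bv\delta f.
\]
The decays $R_1 = o(|\bv|^{-1})$ and $R_3 = o(|\bv|^{-2})$ from Theorem~\ref{theo:limite}, together with boundedness of $\nabla_\bv\delta f$, deliver only $R = o(|\bv|^{-2})$. To upgrade to the $o(|\bv|^{-3})$ claimed in the proposition, one would carry the multipole expansion $1/|\vmv| = 1/|\bv|+\bv\cdot\bvun/|\bv|^3+O(|\bv|^{-3})$ one term further inside each integral. Since $f\in\mathcal{S}(\R)$ the moments $\int|\bvun|^k f\,d^3v_1$ are all finite, and the sharpened remainders $R_1 = O(|\bv|^{-2})$ and $R_3 = O(|\bv|^{-3})$ then follow; substituting them back into the display above yields $R = o(|\bv|^{-3})$.

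The main obstacle is precisely this last refinement: the remainder bounds built into Theorem~\ref{theo:limite} fall one order short of the conclusion, so the theorem cannot quite be used as a black box. The fix is mild, though, because the proof of the theorem already passes through Taylor/Riemann--Lebesgue arguments on $1/|\vmv|$; the extra order of decay emerges as soon as one tracks the first moment of $f$ through those estimates, rather than discarding it after extracting the leading limit. A useful sanity check along the way is that the sign and numerical coefficient $\tfrac{1}{2}$ of the leading term arise from the \emph{partial} cancellation $1-\tfrac{1}{2}=\tfrac{1}{2}$ between the two integrals, so any sign or normalization slip in the substitution step would immediately be visible.
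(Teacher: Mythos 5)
Your leading-order computation follows exactly the route the paper intends: the paper offers no explicit proof of this proposition beyond the remark that it follows from the displayed formula for $\langle\nabla_{\bf x}E\rangle$ together with Theorem \ref{theo:limite}, and your substitution of parts (1) and (3), with the partial cancellation $1-\tfrac12=\tfrac12$, reproduces $\tilde L_C$ correctly. You are also right to flag that the remainder bounds actually stated in Theorem \ref{theo:limite} (namely $R_1=o(|\bv|^{-1})$ and $R_3=o(|\bv|^{-2})$) only deliver $R=o(|\bv|^{-2})$, one order short of the $o(|\bv|^{-3})$ claimed; this is a genuine shortfall in the paper's own implicit argument, not an artifact of your reading, and identifying it is the most valuable part of your write-up.

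However, your proposed repair does not close the gap as stated. Pushing the expansion $1/|\vmv|=1/|\bv|+\bv\cdot\bvun/|\bv|^3+\dots$ one order further gives
\[
R_1=\frac{\bv\cdot{\bf p}}{|\bv|^3}+o\!\left(\frac{1}{|\bv|^{2}}\right),\qquad {\bf p}:=\int_{\R}\bvun\, f_1\,d^3v_1,
\]
and an analogous first-moment (dipole) contribution to $R_3$ at order $|\bv|^{-3}$. These terms are \emph{exactly} of orders $|\bv|^{-2}$ and $|\bv|^{-3}$ whenever ${\bf p}\neq 0$ --- which is the case for the shifted Maxwellian backgrounds the paper explicitly admits --- so substituting them back yields $R=O(|\bv|^{-3})$, not $o(|\bv|^{-3})$. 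To reach the stated conclusion you would need either to assume a centered background (${\bf p}=0$) or to exhibit a cancellation between the dipole contributions of the two integrals entering $\langle\nabla_{\bf x}E\rangle$; a direct computation shows the combination $R_3-\bv R_1/(2|\bv|^2)$ retains a term proportional to $5\bv(\bv\cdot{\bf p})/(2|\bv|^5)-{\bf p}/|\bv|^3$, which does not vanish generically. So your argument in fact establishes $R=o(|\bv|^{-2})$ --- which is all that is needed for $\tilde L_C$ to dominate and is the estimate consistent with the analogous claim in Proposition \ref{pr:asymptotic-QL} --- but not the sharper decay asserted in the proposition without an additional hypothesis on the first moment of $f$.
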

When the background is thermal, namely when $f$ is a Maxwell distribution as in \eqref{eq:maxwell-distribution}, proceeding as in the proof of proposition \ref{pr:asymptotic-cold} we have that the asymptotic form obtained in the previous proposition implies that the limit of vanishing background temperature is such that
\[
\lim_{\beta\to \infty} L_C=\tilde{L}_C=\frac{8\pi\Lambda e^4}{m^2}\frac{n}{2|\bv|^3}\bv\cdot\nabla_{\bv}\delta f,\qquad  \bv\neq 0.
\]
Finally, by direct inspection, we notice that whenever $\delta f$ is an isotropic perturbation as in \eqref{eq:isotropic}, the asymptotic form of the stochastic loss term of the continuity equation $\tilde{L}_C(t,{\bf x},\bv)$ matches the asymptotic form of the Boltzmann collision kernel $\tilde{L}_B(t,{\bf x},\bv)$ given in \eqref{eq:asymptotic-landau-loss-term}.
This observation suggests that the correct form of the modified continuity equation is \eqref{eq:continuity-stochastic} and not \eqref{eq:continuity-stochastic-wrong} as sometimes assumed in the literature.
In particular, the stochastic force due to Coulomb collisions needs to be taken into account adding to the continuity equation a contribution  of the form 
\begin{equation}\label{eqn:continuity-loss}
L_C(\delta f)(t,{\bf x},\bv)=\frac{\langle\nabla_{\bf x} E\rangle}{m} \cdot \nabla_{\bf v}\delta f.
\end{equation}

\section{Resolution of the continuity equation with hyperbolic methods} 

The aim of this section is to write the continuity equation and its solution in a form suitable to be compared with data obtained by 
satellite imaging-spectrometer observing solar flares.
At this point it is important to recall that these instruments 
measure {\em photon fluxes} resolved in space, time and energy
However, since photons are produced by accelerated electrons due to Bremsstrahlung, it is in principle possible to directly obtain information about electron ensemble from the measured photon fluxes.
This has been accomplished by Piana et al. in \cite{pmh07}  where they employed refined regularization techniques to invert the Bremsstrahlung cross section.
The outcomes  of the analysis presented in \cite{pmh07} are 
maps of {\em electron fluxes} $F(t,{\bf x},E)$, which are noting but images resolved in space and (kinetic) energy. 

It is at this point interesting to compare theoretical predictions for electron densities $f$, following from certain kinetic equations like \eqref{eq:continuity-stochastic}, with observations of electron fluxes $F$ obtained from satellite measurements. 

However, since electron fluxes are integrated quantities over all velocity directions, this comparison is not completely straightforward.

For this reason, starting from the precise relation among $F$ and $f$, we transform \eqref{eq:continuity-stochastic} into an equation valid for the electron flux $F(t,{\bf x},E)$. Afterwards  we shall solve this equation using some hyperbolic techniques to obtain a solution comparable with measurements. 

Let us recall that, at fixed time, $F$ describes the flux density over the domain $\mathbb{R}^3\times \mathbb{R}^+$ parametrized by $({\bf x},E)$, while $f$ describes the particle density over $\mathbb{R}^6$ parametrized by $({\bf x},\bv)$. 
Since $E$ is the kinetic energy of a particle of velocity $\bv$, the definition of $F$ in terms of $f$ is
\begin{equation}\label{eq:velocity-angles}
F(t,{\bf x},E) :=   \frac{|\bv|^2}{m}\int_{{\mathbb S}^2} f(t,{\bf x},\bv) d{\mathbb S}^2  \;,\qquad E = \frac{1}{2}m|\bv|^2,
\end{equation}

where the integration is over all possible velocity directions and hence 
$d{\mathbb S}^2$ is the standard measure of the two-dimensional sphere $\mathbb{S}^2$.
Similarly, having in mind \eqref{eq:continuity-stochastic}, the same averaging needs to be applied on the source term. Then, we can define the \emph{injection flux} $\Sigma$ related to the source $S$ to be
\begin{equation}\label{eq:source-vel-ang}
\Sigma(t,{\bf x},E) := \frac{|\bv|^2}{m} \int_{{\mathbb S}^2} S(t,{\bf x},{\bf v}) d{\mathbb S}^2.
\end{equation}
Although the information about the velocity directions is lost in $F$, 
under some particular assumptions we are going to discuss, we might be able to extract an equation for electron fluxes $F$ from \eqref{eq:continuity-stochastic}.

First of all,  in the limit of small background temperature, the leading contribution in the electron distribution $f(t,{\bf x},\bv)$ is given by accelerated electrons only. For this reason, the thermal background might be discarded and thus $f$ obeys the continuity equation equipped with the stochastic force \eqref{eq:continuity-stochastic}, 
\begin{equation}\label{eq:continuity}
 \frac{\partial f}{\partial t} +
  {{\bf v}} \cdot\nabla_{\bf x} f=-\frac{1}{m}\langle\nabla_{\bf x} E\rangle \cdot \nabla_{\bf v} f+S(t,{\bf x},{\bf v})
\end{equation}
where $S(t,{\bf x},{\bf v})$ is some external source and  $\langle\nabla_{\bf x} E\rangle$ is the energy loss rate.

Magnetic fields are particularly strong during solar flares. We thus assume that electrons move freely along the magnetic field lines discarding the rotational motion in perpendicular planes. At the same time we expect that the electron distribution $f$ does not vary much passing to nearby magnetic field lines.
Under this assumption it is convenient to parametrize the space with an appropriate coordinate system $(s,x,y)$, and hence 
$s$ is the arc--length of a magnetic field line $\gamma$ while $\partial/\partial x$ and $\partial/\partial y$ are orthogonal to $\gamma$.
 Thanks to this choice $f(t,{\bf x},\bv)=f(t,s,\bv)$, namely the distribution $f$ depends on space only through $s$ while it is constant under changes of $x,y$ coordinates.  
In order to obtain an equation for the flux density $F$, we assume that both $f$ and the source $S$ in \eqref{eq:velocity-angles} and in  \eqref{eq:source-vel-ang} depend on the velocities only through $|\bv|$, hence $f=f(t,s,|\bv|)$. 

We proceed writing the accelerated electron ensemble as the sum of two parts 
$f=f_++f_-$, 
where particles in $f_+$ move towards positive coordinates $s$ while those in $f_-$ towards negative ones. In other words, $f_+$ (resp. $f_-$) vanishes when $v_s$, the component of the velocity $\bv$ along $\gamma$, is negative (resp. positive). Furthermore,  \eqref{eq:continuity} holds separately for $f_+$ and $f_-$, 
hence
\begin{equation}\label{eq:temp_cont}
 \frac{\partial f_\pm}{\partial t} +
  v_s \frac{\partial f_\pm}{\partial s}=-\frac{1}{m}\langle\nabla_{\bf x} E\rangle \cdot \nabla_{\bf v} f_\pm+S(t,{\bf x},|{\bf v}|).
\end{equation}
To obtain an equation for the flux densities  $F_\pm$ corresponding to $f_\pm$, we proceed by computing the integral of the previous two equations on the velocity directions as in \eqref{eq:velocity-angles} and in \eqref{eq:source-vel-ang}.
Thanks to the isotropic hypothesis for $f$ and $S$, the right hand side of \eqref{eq:temp_cont} does not depend on the velocity directions, while the left hand side depends on the directions only through $v_s$. Since the average value of $v_s$ over the velocity directions, is $|\bv|/2$ (resp. $-|\bv|/2$) for particles in $f_+$ (reps. $f_-$), the integrals gives
\begin{equation}\label{eq:continuity-1}
\frac{\partial}{\partial t} \frac{F_\pm(t,s,E)}{E}  \pm  \sqrt{\frac{E}{2m}} \frac{\partial}{\partial s} \frac{F_\pm(t,s,E)}{E} +\left\langle\frac{dE}{ds}\right\rangle\sqrt{\frac{2E}{m}}\frac{\partial}{\partial E}\left(\frac{F_{\pm}(t,s,E)}{E}\right)  = \frac{\Sigma_{\pm}(t,s,E)}{E},
\end{equation}
where $\Sigma_+(t,s,E)$  (resp. $\Sigma_-$)  are obtained by an integration of $S(t,s,\bv)$ as in \eqref{eq:velocity-angles} over velocity directions corresponding to positive (resp. negative) $v_s$. Notice that, since $S$ is assumed to be isotropic in $\bv$, we have that $\Sigma_+=\Sigma_-$.

We shall now look for static solutions of the previous equation. Assuming $\Sigma_+=\Sigma_-=\Sigma(s,E)$ to be constant in time, \eqref{eq:continuity-1} reads
\begin{equation}\label{eq:continuity-2}
\pm   \frac{\partial}{\partial s} \frac{F_\pm(s,E)}{E}   +2\left\langle\frac{dE}{ds}\right\rangle\frac{\partial}{\partial E}    \frac{F_\pm(s,E)}{E}   = \sqrt{\frac{2m}{E}}\frac{\Sigma (s,E)}{E}.  \;
\end{equation}

We shall now  simplify the problem by means of the following substitutions\footnote{The definition of $x$ is obtained imposing $\frac{dE}{dx}=2\left\langle\frac{dE}{ds}\right\rangle$ and supposing $x(\bar{E})=\bar{x}$}:
\begin{equation}\label{substitution}
A(s,E)=\sqrt{\frac{2m}{E}}\frac{\Sigma(s,E)}{E}   \;, \qquad  x(E)= \bar{x}+\int_{\bar{E}}^E \frac{1}{2\left\langle\frac{dE}{ds}\right\rangle}dE'\;,
\qquad
\Phi_\pm = \frac{F_{\pm}}{E}.
\end{equation}

Then \eqref{eq:continuity-2} take the form 
\begin{equation}\label{eq:temp}
\pm   \frac{\partial}{\partial s}   \Phi_\pm  -    \frac{\partial}{\partial x} \Phi_\pm     = A  \;.
\end{equation}
Since a constant solution is not physically admissible neither for $F_+$ nor for $F_-$,  
we can apply on both sides of \eqref{eq:temp} the operators
$$
\pm   \frac{\partial}{\partial s}   +    \frac{\partial}{\partial x}\;
$$
without loosing any information.
The mixed terms disappear and thus both equations for $\Phi_+$ and $\Phi_-$ can be combined to give 
\begin{equation}\label{eqn:dalemb}
\Box\Phi       = 2 \frac{\partial}{\partial x} A  \;,
\end{equation}
where $\Phi= \Phi_++\Phi_- = F/E$ is the sum of forward and backward moving electrons and $\Box$ is the D'Alembert operator in two dimensions. 
Equation \eqref{eqn:dalemb} is nothing but the wave equation in two dimensions with a source term, where $x$ plays the role of time and $s$ of space.

To find the solution of this equation we have to provide physically motivated initial conditions.
Since electrons cannot gain energy during their motion (injected electrons can only loose energy during their path), 
we shall look for solutions which satisfy the following initial conditions
\begin{equation}\label{inicond}
\lim_{x\to \infty} \Phi(x,s\pm x)=0\;,\qquad \lim_{x\to \infty} \partial_x \Phi(x,s\pm x)=0\;,\qquad \forall s\;.
\end{equation}
The solution can now be obtained by means of the convolution (in the $x,s$ space) of the advanced (with respect to $x$) fundamental solution (Green function) $\delta_{-}=\delta{(x+|s|)}$ of the D'Alembert equation with the source term $2A$, namely
$$
\Phi = \delta_- * 2A \;.
$$
The absence of the derivative $\partial_x$ in front of $A$ in the previous expression, contrary to
the case of  \eqref{eqn:dalemb}, is due to the fact that $\Box\delta_- = \partial_x \delta $. 
Actually, in two dimensions such a derivative is necessary in order to avoid infrared divergences.
Furthermore, $F$ obtained in that way is the unique solution obeying the desired initial conditions 
\eqref{inicond}.
All other solutions of the wave equation with the chosen source
do not vanish for large energies.

We can now perform the convolution using \eqref{substitution}, namely
\begin{equation}\label{eq:continuity-solution-noint}
F(s,E) =  2\,E  \int  \delta\left( x - x_0 +  |s-s_0|   \right) \sqrt{\frac{2m}{E(x_0)}}\frac{\Sigma (s_0,E(x_0) )}{E(x_0)}
 dx_0  ds_0,
\end{equation}
where $x(E)$ is defined as in \eqref{substitution} and $E(x_0)$ is the inverse function of $x(E)$ evaluated in $x_0$.
So, \eqref{eq:continuity-solution-noint} gives 
\begin{equation}\label{eq:continuity-solution}
F(s,E) = 2E \int\sqrt{\frac{2m}{E(x+|s-s_0|)}}\frac{\Sigma (s_0,E(x+|s-s_0|) )}{E(x+|s-s_0|)}ds_0\;.
\end{equation}
Notice that the solution written in the form \eqref{eq:continuity-solution}, is in accordance with the one presented in \cite{guo13} under the assumption of uniform injection in a limited region of the flare. Furthermore, this expression can be directly compared with results provided by experiments, as in \cite{torre, codi13}.\\

More precisely, as in the case of cold target \cite{em78}, the energy loss rate is obtained from 
\[
\left\langle \frac{dE }{ dt } \right\rangle = -2\pi e^4\Lambda n\sqrt\frac{2}{m E},  
\] 
letting $K=2\pi\,e^4\,\Lambda$, we have 
\[
x(E)=\frac{E^2}{4Kn}\qquad \mbox{and hence}\qquad E(x)=2\sqrt{Kn x},
\]
and \eqref{eq:continuity-solution} takes then the form
\begin{equation}\label{eq:continuity-solution-cold}
F(s,E) = 2E \int\sqrt{\frac{2m}{\left(  { E^2}+  4nK|s-s_0|   \right)^{1/2}}}\frac{ \Sigma(s_0,\left(  { E^2}+  4nK|s-s_0|   \right)^{1/2} ) }{\left(  { E^2}+  4nK|s-s_0|   \right)^{1/2}}ds_0\;.
\end{equation}

\section{Comparison with experiments}

In this section we validate the solution of the continuity equation obtained so far for accelerated electron fluxes with the results of a real measurement occurred during a solar flare. 
First of all, we compare the outcome of our analysis with the continuity equation for the electron flux described by Emslie, Barrett and Brown \cite{emsbarbro01}. Following the notation of the previous sections this equation reads
\begin{equation}\label{eq:class-cont}
	\frac{\partial F(s,E)}{\partial s}+\frac{\partial}{\partial E}\left(\left\langle\frac{dE}{ds}\right\rangle F(s,E)\right)=\sqrt{\frac{2m}{E}}\Sigma(s,E).
\end{equation}
Applying the same procedure used to solve \eqref{eq:continuity-1}, we obtain a solution for \eqref{eq:class-cont}, in the form of
\begin{equation}\label{eq:class-continuity-solution}
F(s,E)=2 \left\langle \frac{dE}{ds} \right\rangle^{-1} \int\left\langle \frac{dE}{ds} (E(x'))\right\rangle \sqrt{\frac{2m}{E(x')}}\Sigma(s_0, E(x'))ds_0,
\end{equation}
where $x' = x+|s-s_0|$ and, in contrast to \eqref{substitution}, $x$ is defined as
\begin{equation}\label{eq:x-class}
x(E)= \bar{x}+\int_{\bar{E}}^E \frac{1}{\left\langle\frac{dE}{ds}\right\rangle}dE'.
\end{equation}
From a theoretical point of view, the expressions of $F(s,E)$ given by \eqref{eq:continuity-solution} and \eqref{eq:class-continuity-solution} differ by two aspects. 
Firstly, the energy loss rate enters  
\eqref{eq:continuity-solution} and \eqref{eq:class-continuity-solution} in two different ways. The resulting electron fluxes are compared in Figure \ref{fig:solcfr13} where we notice a significant difference at lower energies, e.g. 1-2 keV.
Secondly, \eqref{eq:continuity-1} is deduced from \eqref{eq:continuity} performing the integral over all particle directions.
In this way, collisions along every direction are taken into account. On the contrary, in the derivation of  \eqref{eq:class-cont} only interactions along the magnetic field lines are considered.
These different assumptions have an explicit effect on the expression of $x(E)$, used in \eqref{substitution} and in \eqref{eq:x-class}, making them differ by a factor two.
This modification is evident also at higher energies, in particular at $10-100$ keV, the energy range in which measurements can be detected by imaging-spectrometers. \\
\begin{center}
\begin{figure}[!ht]
\includegraphics[width=.5\textwidth, angle=90]{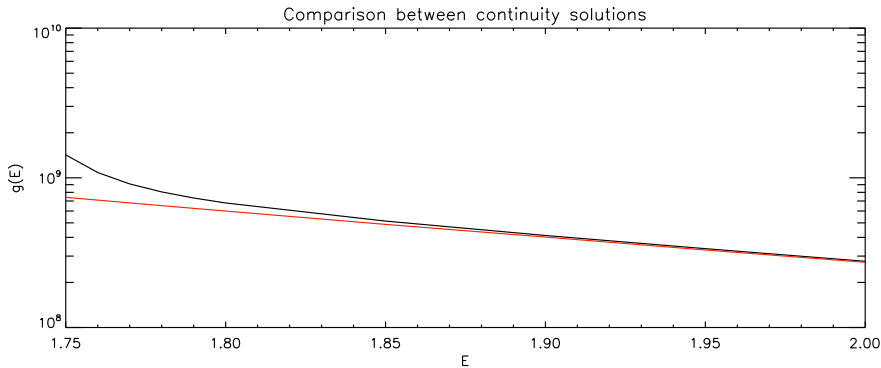}
\caption{In red and black the profiles of $g(E)$ obtained respectively with $F(s,E)$ measured as in \eqref{eq:continuity-solution} and in \eqref{eq:class-continuity-solution} in the electron energy range between 1,75 keV and 2 keV.}\label{fig:solcfr13}
\end{figure}
\end{center}
We check now that the electron flux distributions described in \eqref{eq:continuity-solution} and in \eqref{eq:class-continuity-solution} are both in accordance with the available experimental data. To accomplish this task 
we recall that images of electron fluxes deduced by the NASA Reuven Ramaty High Energy Solar Spectroscopic Imager (RHESSI) spectrometer  \cite{ldh02} can be analyzed. We shall thus validate both solutions \eqref{eq:continuity-solution} and \eqref{eq:class-continuity-solution} with the observations of a single event.
Furthermore, regarding the form of the source considered in the theoretical solutions, 
we assume that the injected electrons are constant along a certain region  
of the flare with length $L_0$
 i.e.
\begin{equation}\label{eq:source_box}
\Sigma(s,E)=h_s\sqrt{\frac{E}{2m}} \left(\frac{E}{E_0}\right)^{-\delta}\Theta\left(\frac{L_0}{2}-|s-s_m|\right),
\end{equation}
where $\Theta$ is the Heaviside function, $h_s$ is a constant which represents the intensity of the injection, $E_0$ is the minimum injection energy and $s_m$ represents the origin of the arc-length system of coordinates. Moreover, the hot target model described in \cite{sp62} is used as energy loss rate, i.e.,
\begin{equation}\label{eq:loss_hot}
\frac{dE}{ds}=-\frac{Kn}{E}\left[\text{erf}\left(\sqrt{\frac{E}{kT}}\right)-\frac{4}{\sqrt{\pi}}\sqrt{\frac{E}{kT}}e^{-\frac{E}{kT}}\right],
\end{equation}
where $n$ represents the background electron density, $kT$ is the plasma temperature measured in keV and $\text{erf}$ is the ordinary error function.
As experimental data, we consider the RHESSI acquisitions of a flare, which occurred on the 15th of April 2002 in the time range between 00:00 and 00:05.\\ 

More in details, we have selected an electron energy range between 18 keV and 34 keV. We have separated this interval into eight sub-intervals of amplitude of 2 keV and we have recovered the correspondent electron flux. In particular, using the method described in \cite{pmh07}, we have obtained eight electron flux maps similar to the one shown in the first column of Figure \ref{fig:arc}. 

As discussed in the theoretical analysis presented in \S 4, we have to restrict the attention to the motion of accelerated electrons along a single fixed magnetic field line. The magnetic field line, or better the electron path, we have considered has thus been selected as the locus of the brightest points of each image row, following the procedure described in \cite{torre}. The value of each pixel along this path represents the measured electron flux $F(s,E)$.
The chosen pixels and the relative electron flux values are shown in the second and the third columns of Figure \ref{fig:arc} respectively.  

Then, for each energy channel, we have computed the mean value of the electron flux along the path, obtaining an averaged electron flux $g(E)$ of the form
\begin{equation}
	g(E)=\int F(s,E) ds.
\end{equation}
The measured values of $g(E)$ are shown in the first row of Figure \ref{fig:fit_15apr}.

We have thus deduced this same quantity using the theoretical models described in \eqref{eq:continuity-solution} and \eqref{eq:class-continuity-solution} for the electron flux, provided a source term and an energy loss rate as described in \eqref{eq:source_box} and \eqref{eq:loss_hot}. 
Hence, the measured values of $g(E)$ have been fitted against theoretical ones using a simulated annealing algorithm \cite{simulatedannealing} with the mean squared error percentage as an energy function.
The best values for $h_s$ and $\delta$ have been obtained, while the target electrons density, the plasma temperature, the minimum injection energy and the injection amplitude have been fixed as $n=10^{11}\,\text{cm}^{-3}$, $kT=1,75\,\text{keV}$, $E_0=18\,\text{keV}$ and $L_0=30\,\text{arcsec}$ respectively.
\begin{figure}[!ht]
\includegraphics[width=.33\textwidth, angle=90]{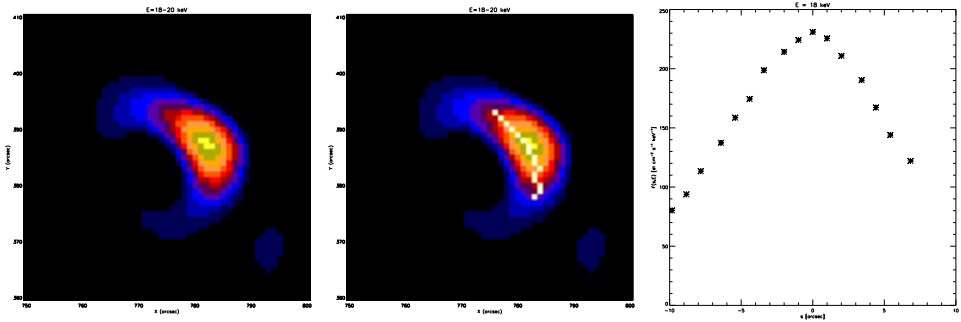}
\caption{In the first column a reconstruction of an electron flux map obtained by the algorithm described in \cite{pmh07} starting from RHESSI data is provided. The accelerated electron path selected following the method described in \cite{torre} is highlighted in white in the second column. The associated pixel values (i.e. the electron flux amounts) are shown in the third column.}\label{fig:arc}
\end{figure}

The values obtained are shown in table \ref{tab:results} while in Figure \ref{fig:fit_15apr} the experimental profile and the fitted solutions are compared. It's interesting to notice that both \eqref{eq:class-continuity-solution} and \eqref{eq:continuity-solution} fit very well the measured data, as both models give a really low value for the mean squared error percentage. Furthermore, the parameters obtained are physically consistent and, actually, they are all similar to those found in \cite{guo13,codi13,torre}. 
On the other hand, although the fitted curves are almost indistinguishable, we stress that parameters obtained using the two models are slightly different. In principle it is thus possible to distinguish between \eqref{eq:continuity-solution} and \eqref{eq:class-continuity-solution}. These difference  should become more evident when more refined measurements will be available.\\

Summarizing, we have compared two different models for the theoretical description of the electron flux, described by \eqref{eq:class-continuity-solution} and \eqref{eq:continuity-solution} respectively. It is important to point out that while the former model has been deduced under an hypothesis of flux conservation, the latter has been developed starting from basic principles and it is still in accordance with experimental data. 

\begin{center}
\begin{figure}[!ht]
\includegraphics[width=.5\textwidth, angle=90]{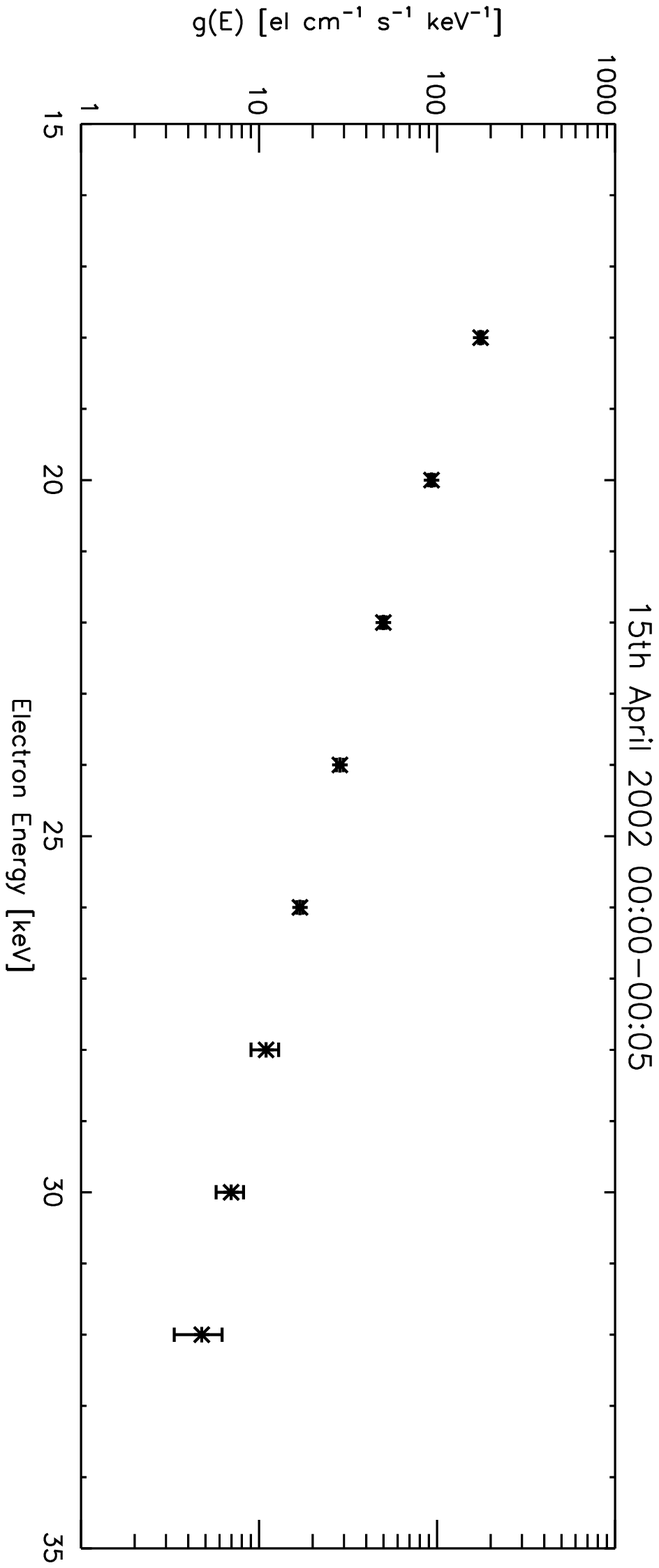}
\includegraphics[width=.5\textwidth, angle=90]{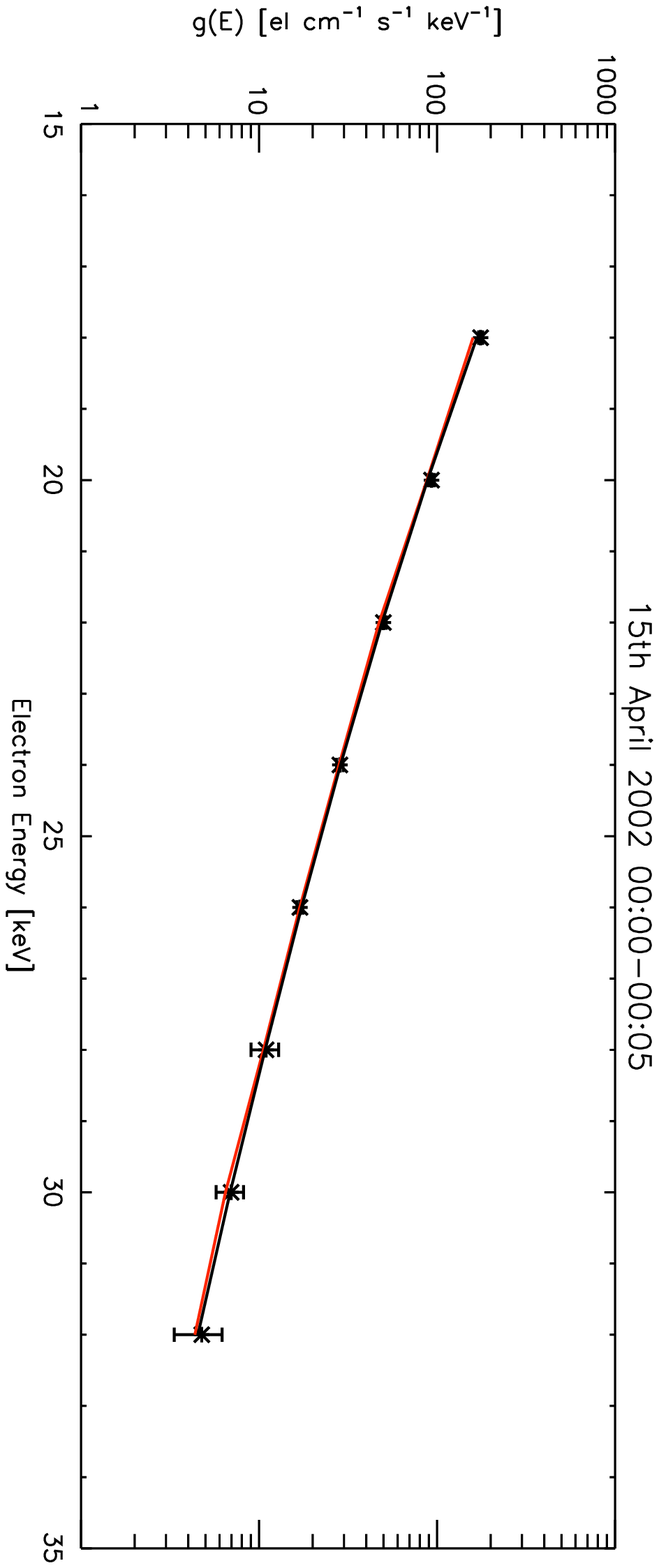}
\caption{2002 April 15 event: $g(E)$ as a function of electron energy. First row: Empirical values of the averaged electron flux. Second row: Empirical values fitted against theoretical models, in red the profile obtained with \eqref{eq:continuity-solution} and in black the profile obtained with \eqref{eq:class-continuity-solution}.}\label{fig:fit_15apr}
\end{figure}
\begin{table}[!ht]
\centering
\footnotesize
\begin{tabular}{|c|c|c|}
\hline
$h_s\,[\text{cm}^{-3}\,\text{s}^{-1}\,\text{keV}^{-1}]$&$\delta$&MSEP\\
\hline
$1.39\times 10^{9}$&$8.12$&$0.0003$\\
$6.81\times 10^{8}$&$7.88$&$0.0006$\\
\hline
\end{tabular}
\caption{Plasma parameters obtained with \eqref{eq:continuity-solution} and \eqref{eq:class-continuity-solution} respectively in the first and in the second row: in the first column the obtained injection intensity, in the second the amplitude of the injection region, in the third the injection spectral index and in the fourth the resulting mean squared error percentage.}\label{tab:results}
\end{table}

\end{center}

\newpage
\section{Comments and open issues}
In this paper we showed that in some asymptotic limits the linearized kinetic Boltzmann equation for Coulomb collisions reduces to a continuity equation equipped with an external stochastic force. 
Specifically, under the assumption of grazing collisions (namely when the Coulomb logarithm is very large), we showed that, in the limit of high values for the electron velocities, the Landau equation reduces to the continuity equation, provided a stochastic loss term of the form 
\[
\frac{\langle \nabla E_{\bf x}\rangle}{m}\cdot \nabla f,
\]
is taken into account. The resulting equation has the form  
\[
\frac{\partial  f}{\partial t} + 
\bv \cdot \nabla_{\bf x}   f+ 
\frac{{\bf F}}{m}\cdot \nabla_\bv f +
\frac{\langle\nabla_{\bf x} E\rangle}{m}\cdot \nabla_\bv f
=  S\;.
\]
We point out that the stochastic contribution written above, is slightly different than the usual loss term considered in some literature about electron motion in solar flares, especially at lower values for the electron energy $E$. 
Since measurements in these domain start becoming very precise this difference should  be appreciated in the close future.
For this reason, in the second part of the paper, 
we have rewritten the continuity equation in an equivalent form ready to be compared with data collected by spectrometer measuring X--ray photons emitted during solar flares. 
Indeed, the continuity equation has been reduced to an hyperbolic equation allowing a simple closed form solution. 
Finally, the obtained solution is written in a form which can be compared with experimental data collected by satellite hard X--ray spectrometer analyzing the sun, see e.g. \eqref{eq:continuity-solution}. In the last section we have shown that the changes suggested for the description of the motion of accelerated electrons are still in accordance with data recorded by the RHESSI spectrometer. More generally, it has been shown in \cite{emsbarbro01,torre, codi13, guo13} that the continuity equation can be used as a tool for the validation of theoretical models for the injection mechanisms and for the accelerated electrons energy loss using data provided by different imaging instruments. For this reason, having a more manageable equations is important, as it allows a more accurate analysis to choose among different acceleration models.

\section*{Acknowledgements}
The authors would like to thank M. Piana, G. Caviglia and G. Torre for useful discussions and for fruitful comments 
about the manuscript.\\

\end{document}